\newtheorem{theorem}{Theorem}[section]
\newtheorem{proposition}[theorem]{Proposition}
\newtheorem{definition}[theorem]{Definition}
\definecolor{myOrange}{HTML}{E65C0D}
\definecolor{myBlue}{HTML}{5D6DC2}
\itshape\color{black},columns=fullflexible,
\footnotesize\color{gray},
\definecolor{colortodobg}{cmyk}{0,0,0.2,0}
\newcommand{\orth}{\ensuremath{\mathcal{MO}}}
\newcommand*\colvec[1]{
        \global\colveccount#1
        \begin{pmatrix}
        \colvecnext
}
\def\colvecnext#1{
        #1
        \global\advance\colveccount-1
        \ifnum\colveccount>0
                \\
                \expandafter\colvecnext
        \else
                \end{pmatrix}
        \fi
}
\newcommand{\ram}{\ensuremath{\texttt{RAM}}{}}
\newcommand{\dtw}{\ensuremath{\texttt{DTW}}{}}
\newcommand{\cd}{\ensuremath{\texttt{CD}}{}}
\newcommand{\frechet}{\ensuremath{\texttt{DK}}{}}
\newcommand{\NN}{\ensuremath{\mathbb{N}}}
\newcommand{\RR}{\ensuremath{\mathbb{R}}}
\renewcommand{\geq}{\geqslant}
\renewcommand{\leq}{\leqslant}
\newcommand{\deff}{\coloneqq}
\newcommand{\dE}{\ensuremath{\textsf{\upshape d}}}
\renewcommand{\phi}{\ensuremath{\varphi}}
\renewcommand{\epsilon}{\ensuremath{\varepsilon}}
\newcommand{\ourcomments}[1]{
  \textbf{\textcolor{green}{#1}}
}
\renewcommand{\ourcomments}[1]{}
\begin{document}

\title{Efficient Measuring of Congruence on High Dimensional Time Series}

\author[1]{Jörg P. Bachmann}
\author[2]{Johann-Christoph Freytag}
\affil[1]{ \texttt{joerg.bachmann@informatik.hu-berlin.de}}
\affil[2]{ \texttt{freytag@informatik.hu-berlin.de}}
\affil[1,2]{Humboldt-Universität zu Berlin, Germany}

\date{\today}

\maketitle

\begin{abstract}
    A time series is a sequence of data items; typical examples are streams of temperature measurements, stock ticker data, or gestures recorded with modern virtual reality motion controllers.
    Quite some research has been devoted to comparing and indexing time series.
    Especially, when the comparison should not be affected by time warping, the ubiquitous Dynamic Time Warping distance function (\dtw) is one of the most analyzed time series distance functions.
    The Dog-Keeper distance (\frechet) is another example for a distance function on time series which is truely invariant under time warping.

    For many application scenarios (e.\,g. motion gesture recognition in virtual reality), the invariance under isometric spatial transformations (i.\,e. rotation, translation, and mirroring) is as important as the invariance under time warping.
    Distance functions on time series which are invariant under isometric transformations can be seen as measurements for the congruency of two time series.
    The congruence distance (\cd{}) is an example for such a distance function.
    However, it is very hard to compute and it is not invariant under time warpings.

    In this work, we are taking one step towards developing a feasable distance function which is invariant under isometric spatial transformations and time warping:
    We develop four approximations for \cd{}.
    Two of these even satisfy the triangle inequality and can thus be used with metric indexing structures.
    We show that all approximations serve as a lower bound to \cd{}.
    Our evaluation shows that they achieve remarkable tightness while providing a speedup of more than two orders of magnitude to the congruence distance.
\end{abstract}

\section{Introduction}\label{sec:introduction}

Multimedia retrieval is a common application which requires finding similar objects to a query object.
We consider examples such as gesture recognition with modern virtual reality motion controllers and classification of handwritten letters where the objects are multi-dimensional time series.

In many cases, similarity search is performed using a distance function on the time series, where small distances imply similar time series.
A \emph{nearest neigbor query} to the query time series can be a $k$-nearest neighbor ($k$-NN) query or an $\epsilon$-nearest neighbor ($\epsilon$-NN) query:
A $k$-NN query retrieves the $k$ most similar time series;
an $\epsilon$-NN query retrieves all time series with a distance of at most $\epsilon$.

In our examples, the time series of the same classes (e.\,g., same written characters or same gestures) differ by temporal as well as spatial displacements.
\emph{Time warping distance functions} such as dynamic time warping (\dtw)~\cite{DTWSakoe} and the Dog-Keeper distance (\frechet)~\cite{WDK17,computingfrechet} are robust against temporal displacements.
They map pairs of time series representing the same trajectory to small distances.
Still, they fail when the time series are rotated or translated in space.

The distance functions defined and analyzed in this paper measure the (approximate) congruence of two time series.
Thereby, the distance between two time series $S$ and $T$ shall be $0$ iff $S$ can be transformed into $T$ by rotation, translation, and mirroring; in this case, $S$ and $T$ are said to be \emph{congruent}.
A value greater than $0$ shall correlate to the amount of transformation needed to turn the time series into congruent ones.

The classical \textsc{Congruence} problem basically determines whether two point sets $A,B\subseteq\mathbb R^k$ are congruent considering isometric transformations (i.\,e., rotation, translation, and mirroring) \cite{Heffernan:1992:ADA:142675.142697,Alt:1988:CSS:44611.44614}.
For $2$- and $3$-dimensional spaces, there are results providing algorithms with runtime $\mathcal O(n\cdot\log n)$ when $n$ is the size of the sets \cite{Alt:1988:CSS:44611.44614}.
For larger dimensionalities, they provide an algorithm with runtime $\mathcal O(n^{k-2}\cdot\log n)$.
For various reasons (e.\,g. bounded floating point precision, physical measurement errors), the \emph{approximated} \textsc{Congruence} problem is of much more interest in practical applications.
Different variations of the approximated \textsc{Congruence} problem have been studied (e.\,g. what types of transformations are used, is the assignment of points from $A$ to $B$ known, what metric is used) \cite{Heffernan:1992:ADA:142675.142697,Alt:1988:CSS:44611.44614,Indyk:2003:ACN:636968.636974,Alt96discretegeometric}.

The \textsc{Congruence} problem is related to our work, since the problem is concerned with the existence of isometric functions such that a point set maps to another point set.
The main difference is, that we consider ordered lists of points (i.\,e. time series) rather than pure sets.
It turned out, that solving the approximated \textsc{Congruence} problem is NP-hard regarding length and dimensionality \cite{Congruence}.


With this work, we contribute by evaluating the congruence distance with an implementation based on a nonlinear optimizer.
We propose two approximations to the congruence distance which have linear runtime regarding the dimensionality and (quasi-) quadratic runtime regarding the length of the time series.
We improve the complexity of both approximations at cost of approximation quality, such that their complexity is (quasi-) linear regarding the length of the time series.
We evaluate the approximations experimentally.

\subsection{Basic Notation}
\label{sec:notation}

We denote the natural numbers including zero with $\NN$ and the real numbers with $\RR$.
For $a,b\in\NN$ we denote the modulo operator by $a\%b$.
The set of all powers of two is denoted via $2^\NN\coloneqq\left\{ 1,2,4,8,\cdots \right\}$.

Elements of a $k$-dimensional vector $v\in\RR^k$ are accessed using subindices, i.\,e. $v_3$ is the third element of the vector.
Sequences (here also called time series) are usually written using capital letters, e.\,g. $S=(s_0,\cdots,s_{n-1})$ is a sequence of length $n$.
Suppose $s_i\in\RR^k$, then $s_{i,j}$ denotes the $j$-th element of the $i$-th vector in the sequence $S$.
The projection to the $j$-th dimension is denoted via $S^j$, i.\,e. $S^j=(s_{0,j},\cdots,s_{n-1,j})$.
The Euclidean norm of a vector $v$ is denoted via $\|v\|_2$, thus $\dE(v,w)\coloneqq\|v-w\|_2$ denotes the Euclidean distance between $v$ and $w\in\RR^k$.

We denote the set of $k$-dimensional orthogonal matrices with $\orth(k)$, the identity matrix with $I$ and the transposed of a matrix $M$ with $M^T$.
For a matrix $M$ in $\RR^k$, we denote the matrix holding the absoloute values with $|M|\coloneqq \left( |M_{i,j}| \right)_{0\leq i,j < k}$.

\subsection{Congruence Distance}

While \dtw{} compares two time series $S$ and $T$, it is (nearly) invariant under time warpings.
In detail, consider $\sigma(S)$ and $\tau(T)$ as warpings by duplicating elements (e.\,g. $\sigma(S)=\left( s_0,s_1,s_1,s_1,s_2,s_3,s_3,s_4,\cdots \right)$), then $\dtw$ minimizes the L1 distance under all time warps:
\begin{align*}
    \dtw(S,T) \coloneqq \min_{\sigma,\tau} \sum_{i=0}^{|\sigma(S)|-1} \dE\left( \sigma(S)_i, \tau(T)_i\right)
\end{align*}
with $|\sigma(S)|=|\tau(T)|$.

On the other hand, the congruence distance is invariant under all isometric transformations.
The difference to \dtw{} is, that it minimizes the L1 distance by multiplying an orthogonal matrix and adding a vector:
\begin{align}
    \label{eq:congruence}
    d^C(S,T) &\coloneqq \min_{M,v} f(M,v) \nonumber \\
    &\coloneqq \min_{M,v} \sum_{i=0}^{n-1} \dE\left( s_i, M\cdot t_i+v\right)
\end{align}
The computation of $d^C(S,T)$ is an optimization problem where $f$ (cf. Equation~(\ref{eq:congruence})) corresponds to the objective function.
For time series in $\RR^k$, the orthogonality of $M$ yields a set of $k^2$ equality based constraints.

\section{Approximating the Congruency}
\label{sec:approximation}

Consider two time series $S,T$, an arbitrary orthogonal matrix $M\in\orth(k)$, and a vector $v\in\RR^k$.
Using the triangle inequality, we obtain
\begin{align*}
    \label{eq:delta}
    &\dE(s_i, M\cdot s_j+v) \leq \\
    &\, \dE(s_i, M\cdot t_i+v)+\dE(t_i, t_j)+\dE(M\cdot t_j+v, s_j) \\
    \Rightarrow &\left| \dE(s_i, s_j)-\dE(t_i, t_j)\right| \leq \\
    &\,\dE(s_i, M\cdot t_i+v)+\dE(s_j, M\cdot t_j+v) \addtocounter{equation}{1}\tag{\theequation}
\end{align*}
i.\,e. we can estimate the congruence distance $\sum_{i=0}^{n-1} \dE(s_i, M\cdot t_i+v)$ without actually solving the optimization problem.
We unroll this idea to propose two approximating algorithms in Section~\ref{sec:approxdelta}~and~\ref{sec:approxgreedy}.

Considering the well-known \emph{self-similarity matrix} of a time series, the left hand side of Equation~(\ref{eq:delta}) matches the entry of the difference between two self-similarity matrices.
Usually, the self-similarity matrix is used to analyze a time series for patterns (e.\,g. using Recurrence Plots \cite{recurrenceplots}).
The important property that makes the self-similarity matrix useful for approximating the congruence distance, is its invariance under transformations considered for the congruence distance, i.\,e. rotation, translation, and mirroring.

The \emph{self-similarity matrix} of an arbitrary time series $T=(t_0,\dots,t_{n-1})$ is defined as follows:
\begin{displaymath}
    \Delta T \ \ \coloneqq \ \ \big(\; \dE\left(t_i,t_j\right)\; \big)_{0\leq i,j < n}
\end{displaymath}
Note, that $\Delta T_{i,j} = \Delta T_{j,i}$ and $\Delta T_{i,i}=0$.
In fact, the self-similarity matrix $\Delta T$ completely describes the sequence $T$ up to congruence, i.e., up to rotation, translation, and mirroring of the whole sequence in $\RR^k$ \cite{Congruence}:
Two time series $S$ and $T$ are congruent iff they have the same self-similarity matrix, i.\,e.
\begin{align}
    \label{eq:congiffdelta}
    &\exists M\in\orth(k),v\in\RR^k: \nonumber \\
    &\quad S=M\cdot T+v \ \iff \ \Delta S=\Delta T.
\end{align}

\subsection{Metric Approximation}
\label{sec:approxdelta}

Equation~(\ref{eq:delta})~and~(\ref{eq:congiffdelta}) yield the approach for approximating the congruency:
We measure the congruency of two time series $S$ and $T$ via a metric on their self-similarity matrices.

\begin{definition}[Delta Distance]
    \label{def:deltadistance}
    Let $S,T$ be two time series of length $n$.
    The \emph{delta distance} $d^\Delta(S,T)$ is defined as follows: 
    \begin{align*}
        &d^\Delta(S,T) \coloneqq \\
        &\quad \frac 1 2 \max_{0<\delta<n} \sum_{i=0}^{n-1} \left| \dE\left(s_i,s_{(i+\delta)\% n}\right) - \dE\left( t_i,t_{(i+\delta)\% n} \right) \right|
    \end{align*}
\end{definition}

\begin{proposition}
    The delta distance satisfies the triangle inequality.
\end{proposition}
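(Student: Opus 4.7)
\medskip

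The plan is to prove the inequality one $\delta$ at a time, and then pull the maximum to the outside. Fix three time series $S$, $T$, $U$ of length $n$, and for each $0<\delta<n$ and each $0\leq i<n$, consider the three real numbers
\begin{align*}
    a_i &\coloneqq \dE(s_i, s_{(i+\delta)\%n}), \\
    b_i &\coloneqq \dE(t_i, t_{(i+\delta)\%n}), \\
    c_i &\coloneqq \dE(u_i, u_{(i+\delta)\%n}).
\end{align*}
The ordinary triangle inequality on $\RR$ gives $|a_i-c_i|\leq |a_i-b_i|+|b_i-c_i|$ for every $i$. Summing over $i$ yields, for each fixed $\delta$,
\begin{align*}
    g_\delta(S,U) \leq g_\delta(S,T) + g_\delta(T,U),
\end{align*}
where $g_\delta(X,Y)\coloneqq\sum_{i=0}^{n-1}\left|\dE(x_i,x_{(i+\delta)\%n})-\dE(y_i,y_{(i+\delta)\%n})\right|$.

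Next I would take the maximum over $\delta$ on both sides. Using the elementary fact that $\max_\delta(\alpha_\delta+\beta_\delta)\leq \max_\delta \alpha_\delta + \max_\delta \beta_\delta$, we obtain
\begin{align*}
    \max_{0<\delta<n} g_\delta(S,U) \leq \max_{0<\delta<n} g_\delta(S,T) + \max_{0<\delta<n} g_\delta(T,U).
\end{align*}
Multiplying by the harmless prefactor $\tfrac 1 2$ gives exactly $d^\Delta(S,U)\leq d^\Delta(S,T)+d^\Delta(T,U)$, which is the triangle inequality to be established.

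There is no real obstacle: the argument is essentially the observation that a sup of sums of termwise triangle inequalities is again a triangle inequality, and the constant factor $\tfrac 1 2$ and the cyclic index $(i+\delta)\%n$ are purely cosmetic for this particular proof. The only subtle point worth flagging explicitly is that one must not try to swap the maximum inside the sum — the single-$\delta$ inequality must be derived first, and only afterwards should the maximum be passed through the inequality via the subadditivity of $\max$.
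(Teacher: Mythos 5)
Your proof is correct and follows essentially the same route as the paper's: a termwise triangle inequality on the real numbers $\dE(x_i,x_{(i+\delta)\%n})$ for a single $\delta$, followed by passing to the maximum (the paper fixes the maximizing $\delta^*$ for the left-hand side up front, whereas you prove the per-$\delta$ inequality for all $\delta$ and then invoke subadditivity of $\max$, but these are the same argument). If anything, your write-up is cleaner, since the paper's displayed chain contains a sign typo in the middle term.
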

\begin{proof}
    Consider three time series $R,S,$ and $T$ and fixiate a $\delta^*$ which maximizes $d^\Delta(R,T)$ in Definition~\ref{def:deltadistance}.
    Then
    \begin{align*}
        &d^\delta(R,T) = \\
        &\quad\sum_{i=0}^{n-1} \left| \dE\left( r_i,r_{(i+\delta^*)\% n} \right)-\dE\left( t_i,t_{(i+\delta^*)\% n} \right) \right| \\
        &\leq\sum_{i=0}^{n-1} \left| \dE\left( r_i,r_{(i+\delta^*)\% n} \right)+\dE\left( s_i,s_{(i+\delta^*)\% n} \right) \right| + \\
        &\quad\left| \dE\left( s_i,s_{(i+\delta^*)\% n} \right)-\dE\left( t_i,t_{(i+\delta^*)\% n} \right) \right| \\
        &\leq d^\Delta(R,S)+d^\Delta(S,T)
    \end{align*}
    prooves the triangle inequality.
\end{proof}
Since $\dE$ is symmetric, $d^\Delta$ inherits its symmetry.
Hence, $d^\Delta$ is a pseudo metric on the set of time series of length $n$ where all time series of an equivalence class are congruent to each other.

We omit providing pseudo code since the computation matches the formula in Definition~\ref{def:deltadistance}.
The complexity of computing the delta distance $d^\Delta$ grows quadratically with the length of the time series.

Our next aim is to show that the the delta distance $d^\Delta$ provides a lower bound on the congruence distance $d^C$, as formulated in the following theorem.

\begin{theorem}
    \label{thm:quadmaxest}
    For all time series $S$ and $T$, the following holds:
    \begin{align*}
        d^\Delta(S,T) \ \ \leq \ \ d^C(S,T).
    \end{align*}
\end{theorem}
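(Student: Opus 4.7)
The plan is to combine the key triangle-inequality bound from Equation~(\ref{eq:delta}) with the cyclic structure of the sum in Definition~\ref{def:deltadistance}, exploiting that the Euclidean distance is preserved by isometries. Concretely, for any orthogonal $M\in\orth(k)$ and any $v\in\RR^k$, since $\dE(M t_i+v, M t_j+v) = \dE(t_i,t_j)$, the inequality
\begin{align*}
    \bigl|\dE(s_i, s_j)-\dE(t_i, t_j)\bigr| \leq \dE(s_i, M t_i+v)+\dE(s_j, M t_j+v)
\end{align*}
holds for every pair of indices $i,j$.

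First, I would fix an arbitrary $\delta\in\{1,\ldots,n-1\}$ and sum the above inequality over $i=0,\ldots,n-1$ with $j=(i+\delta)\%n$. On the right-hand side this produces two sums: $\sum_{i} \dE(s_i, M t_i+v)$ and $\sum_{i} \dE(s_{(i+\delta)\%n}, M t_{(i+\delta)\%n}+v)$. The second sum is just a cyclic permutation of the first, so both equal $f(M,v)$. Thus
\begin{align*}
    \sum_{i=0}^{n-1}\bigl|\dE(s_i,s_{(i+\delta)\%n})-\dE(t_i,t_{(i+\delta)\%n})\bigr| \leq 2\, f(M,v).
\end{align*}
Dividing by $2$ and taking the maximum over $\delta$ on the left yields $d^\Delta(S,T)\leq f(M,v)$ for every admissible $(M,v)$.

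Finally, I would take the infimum over $M\in\orth(k)$ and $v\in\RR^k$ on the right-hand side; by definition of $d^C$ this infimum is exactly $d^C(S,T)$, giving the claim.

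There is no real obstacle here: the content of the argument is the observation that the factor $\tfrac{1}{2}$ in Definition~\ref{def:deltadistance} is calibrated precisely to absorb the ``doubling'' that arises because each index $i$ in the cyclic sum appears both as the first and as the second argument of $\dE$. The only subtlety worth flagging is the cyclic shift invariance of the sum, which requires the use of $(i+\delta)\%n$ rather than an unwrapped index and ensures that the two telescoped sums are equal rather than merely comparable.
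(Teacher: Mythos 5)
Your argument is correct and is essentially the paper's own proof: both rest on the isometry invariance $\dE(Mt_i+v,Mt_j+v)=\dE(t_i,t_j)$, the triangle-inequality bound of Equation~(\ref{eq:delta}), and the observation that the shifted sum $\sum_i \dE(s_{(i+\delta)\%n}, M t_{(i+\delta)\%n}+v)$ is a cyclic reindexing of $\sum_i \dE(s_i, M t_i+v)$, which the factor $\tfrac12$ absorbs. The only cosmetic difference is that the paper fixes the maximizing $\delta^*$ up front, whereas you bound every $\delta$ uniformly and take the maximum at the end.
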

\begin{proof}
    Fixiate a $\delta^*$ which maximizes $d^\Delta(S,T)$ in Definition~\ref{def:deltadistance}.
    Using the triangle inequality as in Equation~(\ref{eq:delta}) yields
    \begin{align*}
        &d^\Delta(S,T) = \\
        &\quad\frac 1 2\sum_{i=0}^{n-1} \left| \dE\left( s_i,s_{(i+\delta^*)\% n} \right)-\dE\left( t_i,t_{(i+\delta^*)\% n} \right)\right| \\
        &= \frac 1 2\sum_{i=0}^{n-1} \left| \dE\left( s_i,s_{(i+\delta^*)\% n} \right)- \right. \\
        &\quad\quad \left. \dE\left( M\cdot t_i+v,M\cdot t_{(i+\delta^*)\% n}+v \right)\right| \\
        &\leq \frac 1 2\sum_{i=0}^{n-1} \left( \dE\left(s_i,M\cdot t_i+v\right)+ \right. \\
        &\quad\quad \left. \dE\left(s_{(i+\delta^*)\% n}, M\cdot t_{(i+\delta^*)\% n}+v \right) \right) \\
        &= \sum_{i=0}^{n-1} \dE\left( s_i,M\cdot t_i+v \right)
    \end{align*}
    for arbitrary $M\in\orth(k)$ and $v\in\RR^k$.
    Hence, $d^\Delta(S,T)\leq d^C(S,T)$.
\end{proof}

In this section, we provided the delta distance, which is a metric lower bound to the congruence distance.

\subsection{Greedy Approximation}
\label{sec:approxgreedy}

The approach of the delta distance is simple:
For time series $S$ and $T$, it only sums up values along a (wrapped) diagonal in $|\Delta S-\Delta T|$ and chooses the largest value.
However, another combination of elements within $|\Delta S-\Delta T|$ as addends might provide a better approximation of the congruence distance.
Since it is a computational expensive task, to try all combinations, we try to find a good combination using a greedy algorithm for selecting the entries of $|\Delta S-\Delta T|$.

The greedy algorithm first sorts the elements $d_{i,j} = \left|\dE(s_i,s_j)-\dE(t_i,t_j)\right|$ in descending order and stores them in a sequence $Q=\left( d_{i_1,j_1}, d_{i_2,j_2},\cdots \right)$.
While iterating over the sequence $Q$, it adds $d_{i_r,j_r}$ to a global sum and masks the indices $i_r$ and $j_r$ as already seen.
Elements in the queue which access already seen indices are skipped, thus each index is used at most once.
Basically, this is the reason, why the greedy delta distance (denoted as $d^G(S,T)$) is a lower bound to the congruence distance.
Theorem~\ref{thm:greedylb} proves the last statement and Algorithm~\ref{alg:approxgreedy} provides the pseudo code for the computation.

\begin{algorithm}
    \caption{Greedy Delta Distance}
    \begin{lstlisting}
Algorithm: greedydelta
Input: time series $S, T$ of length $n$
Output: distance $d$

let $Q=()$ // empty sequence
for $i=0,\ldots,n-2$
  for $j=i+1,\ldots,n-1$
    append $d_{i,j}\coloneqq\left|\dE\left( s_i,s_j \right)-\dE\left( t_i,t_j \right)\right|$ to $Q$
sort $Q$ // (descending)
let $S=\emptyset$
let $d = 0$
for each $d_{i_a,j_a}$ in $Q$
  if $i_a\in S$ or $j_a\in S$ continue
  let $d=d+d_{i_a,j_a}$
  let $S=S\cup \left\{ i_a,j_a \right\}$
return $d$
    \end{lstlisting}
    \label{alg:approxgreedy}
\end{algorithm}

The complexity is dominated by sorting $n^2$ elements, which takes $n^2\cdot\log(n^2)$ steps.

\begin{theorem}
    \label{thm:greedylb}
    For all time series $S$ and $T$, the following holds:
    \begin{align*}
        d^G(S,T) \ \ \leq \ \ d^C(S,T).
    \end{align*}
\end{theorem}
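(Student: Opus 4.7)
The plan is to reuse the two--point triangle inequality from Equation~(\ref{eq:delta}) one pair at a time, and then exploit the key structural property of the greedy algorithm, namely that every coordinate index $i\in\{0,\dots,n-1\}$ appears in at most one of the retained pairs.

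First I fix an optimal matrix $M\in\orth(k)$ and vector $v\in\RR^k$ that realize the minimum in the definition of $d^C(S,T)$, so that
\begin{align*}
    d^C(S,T) \;=\; \sum_{i=0}^{n-1} e_i, \qquad e_i \;\coloneqq\; \dE(s_i,M\cdot t_i+v).
\end{align*}
Let $P\subseteq\{(i,j):0\leq i<j<n\}$ denote the set of pairs the greedy algorithm retains (those for which the check \texttt{$i_a\in S$ or $j_a\in S$} fails), so $d^G(S,T)=\sum_{(i,j)\in P} d_{i,j}$. Applying the derivation of Equation~(\ref{eq:delta}) to each retained pair gives
\begin{align*}
    d_{i,j} \;=\; \bigl|\dE(s_i,s_j)-\dE(t_i,t_j)\bigr| \;\leq\; e_i+e_j
\end{align*}
for every $(i,j)\in P$, exactly as in the proof of Theorem~\ref{thm:quadmaxest}.

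Summing this bound over $P$ yields $d^G(S,T)\leq \sum_{(i,j)\in P}(e_i+e_j)$. The crucial observation now is that the greedy rule inserts an index into the seen--set $S$ as soon as it is used, so the multiset of indices occurring across all retained pairs contains every $i\in\{0,\dots,n-1\}$ at most once. Therefore
\begin{align*}
    \sum_{(i,j)\in P}(e_i+e_j) \;\leq\; \sum_{i=0}^{n-1} e_i \;=\; d^C(S,T),
\end{align*}
which gives the claim. Note that neither the descending sort nor the specific order of insertion is actually needed for the inequality; they only affect the \emph{tightness} of the approximation, not its status as a lower bound.

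The only real obstacle is making the bookkeeping above precise: one must argue cleanly that the values $e_{i_r},e_{j_r}$ summed over $r$ are a sum of \emph{distinct} $e_i$'s, so that telescoping to $\sum_i e_i$ is legitimate. This follows directly from the loop invariant that after processing a pair $(i_a,j_a)$ the set $S$ contains $\{i_a,j_a\}$, and any later pair touching either index is skipped. Everything else is a routine application of triangle inequalities already carried out in Theorem~\ref{thm:quadmaxest}.
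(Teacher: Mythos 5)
Your proof is correct and follows essentially the same route as the paper's: bound each retained entry $d_{i_a,j_a}$ by $\dE(s_{i_a},M\cdot t_{i_a}+v)+\dE(s_{j_a},M\cdot t_{j_a}+v)$ via Equation~(\ref{eq:delta}), then use the fact that the greedy masking lets each index occur in at most one retained pair to dominate the sum by $\sum_{i=0}^{n-1}\dE(s_i,M\cdot t_i+v)$. Your explicit remark that the descending sort is irrelevant to the lower-bound property (only to its tightness) is a correct observation that the paper leaves implicit.
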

\begin{proof}
    Let $Q^*=\left( d_{i_1,j_1},\cdots,d_{i_r,j_r} \right)$ be the list of elements from the queue in Algorithm~\ref{alg:approxgreedy} which have not been skipped.
    Since each index is appears at most once in this list, the following inequality holds for arbitrary orthogonal matrices $M$ and vectors $v\RR^k$:
    \begin{align*}
        &d^G(S,T) = \sum_{a=1}^r d_{i_a,j_a} \\
        &\leq \sum_{a=1}^r \left| \dE\left( s_{i_a},s_{j_a} \right)-\dE\left( M\cdot t_{i_a}+v,M\cdot t_{j_a}+v \right) \right| \\
        &\leq \sum_{a=1}^r \dE\left( s_{i_a},t_{i_a} \right)+\dE\left( M\cdot t_{i_a}+v,M\cdot t_{j_a}+v \right) \\
        &\leq \sum_{i=0}^{n-1} \dE\left( s_i,M\cdot t_i+v \right)
    \end{align*}
    Hence, $d^G(S,T)\leq d^C(S,T)$.
\end{proof}

\subsection{Runtime improvement}
\label{sec:runtimeimprovement}

The complexity of the delta distance and greedy delta distance is linear regarding the dimensionality but quadratic in length.
In this section, we motivate an optimization for both algorithms.

Time series usually do not contain random points, but they come from continuous processes in the real world, i.\,e. the distance between two successive elements is rather small.
Hence, the distances $\dE\left( t_i,t_j \right)$ and $\dE\left( t_i,t_{j+1} \right)$ are probably close to each other if $i \ll j$, i.\,e. if $j$ is much larger than $i$.
This insight leads to the idea, to only consider elements $\dE\left( t_i,t_j \right)$ where $|i-j|$ is a power of two, i.\,e. we consider less elements for larger temporal distances.

\paragraph*{The Fast Delta Distance:}
Adapting the idea to the delta distance $d^\Delta$ yields the following definition.
\begin{definition}[Fast Delta Distance]
    \label{def:fastdeltadistance}
    Let $S,T$ be two time series of length $n$.
    The \emph{fast delta distance} $\tilde d^\Delta(S,T)$ is defined as follows: 
    \begin{align*}
        &\tilde d^\Delta(S,T) \deff \\
        &\frac 1 2 \max_{0<\delta<n, \delta\in 2^\NN} \sum_{i=0}^{n-1} \left| \dE\left(s_i,s_{(i+\delta)\% n}\right) - \dE\left( t_i,t_{(i+\delta)\% n} \right) \right|
    \end{align*}
\end{definition}
Since we omit some values $\delta$ in Definition~\ref{def:deltadistance}, the fast version $\tilde d^\Delta$ is a lower bound to $d^\Delta$, i.\,e. the following theorem holds:
\begin{theorem}
    \label{thm:fastdeltalb}
    For all time series $S$ and $T$, the following holds:
    \begin{align*}
        \tilde d^\Delta(S,T) \ \ \leq \ \ d^\Delta(S,T)
    \end{align*}
\end{theorem}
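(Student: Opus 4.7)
The statement is essentially a \emph{monotonicity of maximum over a restricted index set}. My plan is to observe that both $d^\Delta(S,T)$ and $\tilde d^\Delta(S,T)$ are defined as $\tfrac12$ times the maximum of the same family of nonnegative quantities
\begin{align*}
    F(\delta) \ \deff \ \sum_{i=0}^{n-1} \bigl|\,\dE(s_i, s_{(i+\delta)\%n}) - \dE(t_i, t_{(i+\delta)\%n})\,\bigr|,
\end{align*}
the only difference being the range over which $\delta$ is allowed to vary: the original definition ranges over all $\delta$ with $0 < \delta < n$, while the fast variant restricts $\delta$ to the subset of powers of two in that range.

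First I would note that the index set $\{\delta \in 2^\NN : 0 < \delta < n\}$ is a (nonempty, as long as $n \geq 2$) subset of $\{\delta \in \NN : 0 < \delta < n\}$. Second, by the elementary fact that the maximum of a real-valued function over a subset is at most the maximum over the superset, one obtains
\begin{align*}
    \max_{0<\delta<n,\,\delta\in 2^\NN} F(\delta) \ \leq \ \max_{0<\delta<n} F(\delta).
\end{align*}
Multiplying both sides by $\tfrac12$ yields exactly $\tilde d^\Delta(S,T) \leq d^\Delta(S,T)$.

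There is no real obstacle here; the argument is a one-line appeal to monotonicity of $\max$. The only minor subtlety worth mentioning is the degenerate case $n = 1$, where both sides are empty sums/maxima and the inequality is trivially (or vacuously) satisfied, and ensuring that at least one power of two lies in the range $0 < \delta < n$ for $n \geq 2$, which is immediate since $\delta = 1 \in 2^\NN$.
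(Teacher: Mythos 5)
Your proposal is correct and matches the paper's own justification: the paper dispenses with a formal proof and simply remarks that since some values of $\delta$ are omitted in Definition~\ref{def:deltadistance}, the fast version is a lower bound — i.e., exactly your observation that the maximum over the subset $\{\delta \in 2^\NN : 0<\delta<n\}$ cannot exceed the maximum over all $0<\delta<n$. Your additional care about nonemptiness of the restricted index set for $n\geq 2$ is a small bonus the paper does not bother with.
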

Especially, the fast delta distance is also a lower bound to the congruence distance.
For time series of length $n$ the complexity of the fast delta distance $\tilde d^\Delta$ improves to $n\log n$.
On the other hand, equivalence classes regarding the fast delta distance might include time series which are not congruent.

\paragraph*{The Fast Greedy Delta Distance:}
Incorporating the idea for improving the runtime into the greedy delta distance simply changes Line~$7$ of Algorithm~\ref{alg:approxgreedy}:
We only consider values for the variable $j$, which add a power of $2$ to the variable $i$.
Algorithm~\ref{alg:fastgreedy} provides the line to change in Algorithm~\ref{alg:approxgreedy} in order to achieve the fast greedy delta distance.
\begin{algorithm}
    \caption{Distinction between Greedy Delta Distance (cf. Algorithm~\ref{alg:approxgreedy}) and Fast Greedy Delta Distance}
    \begin{lstlisting}[firstnumber=7]
for $j\in 2^{\NN}$ with $j\leq n-1$
    \end{lstlisting}
    \label{alg:fastgreedy}
\end{algorithm}

The fast greedy delta distance is again dominated by the sorting of elements.
This time, $n\log n$ elements have to be sorted, thus its complexity is $n\log(n) \log(n\log n)=n \log(n)^2$.
Hence, the fast versions both have quasi linear runtime regarding length and linear runtime regarding dimensionality.

An inequality such as in Theorem~\ref{thm:fastdeltalb} does not exist for the fast greedy delta distance.
Also, there is no correlation between the (fast) delta distance and the (fast) greedy distance.
Though, the evaluation shows that the greedy delta distance provides a much better approximation in most cases.
Call for Section~\ref{sec:experiments} for an evaluation of their tightness to the congruence distance.

\section{Evaluation}
\label{sec:experiments}

Since the exact computation of the congruence distance is a computational hard problem (an thus not feasable in practical applications), we are mainly interested in the evaluation of the approximations.
Unfortunately, there is no direct algorithm for the computation of the congruence distance and we have to consider the computation of the congruence distance as a nonlinear optimization problem.
For two time series $S$ and $T$, we will denote the distance value computed by an optimizer with $d^O(S,T)$.
Since an optimizer might not find the global optimum, all values for the congruence distance (computed by an optimizer) in this section, are in fact upper bounds to the correct but unknown value of the congruence distance, i.\,e. $d^C(S,T)\leq d^O(S,T)$.
This given circumstance complicates the evaluation of our approximations to the congruence distance.

To estimate the tightness of the approximations, we first evaluate our optimizer on problems for which we know the correct results (cf.~Section~\ref{sec:evaloptimizer}).
In those cases, where the error of the optimizer is small, the estimation of the tightness of our approximations is accurate.
On the other hand, when the error of the optimizer increases, our estimation of the tightness of our approximations are loose and the approximation might be tighter than the experiments claim.

For a detailed explanation, consider a lower bound $\ell(S,T)$ for the congruence distance (e.\,g. $\ell$ might be one of $d^G$, $\tilde d^G$, $d^\Delta$, or $\tilde d^\Delta$) and suppose $d^C(S,T)=d^O(S,T)-\varepsilon$, i.\,e. $\varepsilon\geq 0$ is the error of the optimizer.
Then, we have the following correlation between the estimated tightness and the real tightness:
\begin{align*}
    \frac{\ell(S,T)}{d^C(S,T)} &= \frac{\ell(S,T)}{d^O(S,T)-\varepsilon} \geq \frac{\ell(S,T)}{d^O(S,T)}
\end{align*}
Hence, for small errors $\varepsilon$, the estimated tightness is accurate and for large errors $\varepsilon$ we underestimate the tightness.
In Section~\ref{sec:tightness}~and~\ref{sec:speedup}, we evaluate the tightness and the speedup of approximations to the (optimizer based) congruence distance, respectively.

\subsection{Congruence Distance: An Optimization Problem}
\label{sec:evaloptimizer}

Consider fixed time series $S$ and $T$ in $\RR^k$ with length $n$.
The congruence distance is a nonlinear optimization problem with equality based constraints.
The function to minimize is
\begin{align*}
    f\left( M,v \right) &= \sum_{i=0}^{n-1} \dE\left( s_i,M\cdot t_i+v \right)
\end{align*}
while the $k^2$ equality based constraints correspond the the constraints for orthogonal matrices:
\begin{align*}
    M\cdot M^T &= I.
\end{align*}
As a initial ``solution'' for the optimizer, we simply choose $M=I$ and $v=0$.

\begin{figure}
    \centering
    \includegraphics[width=.49\linewidth]{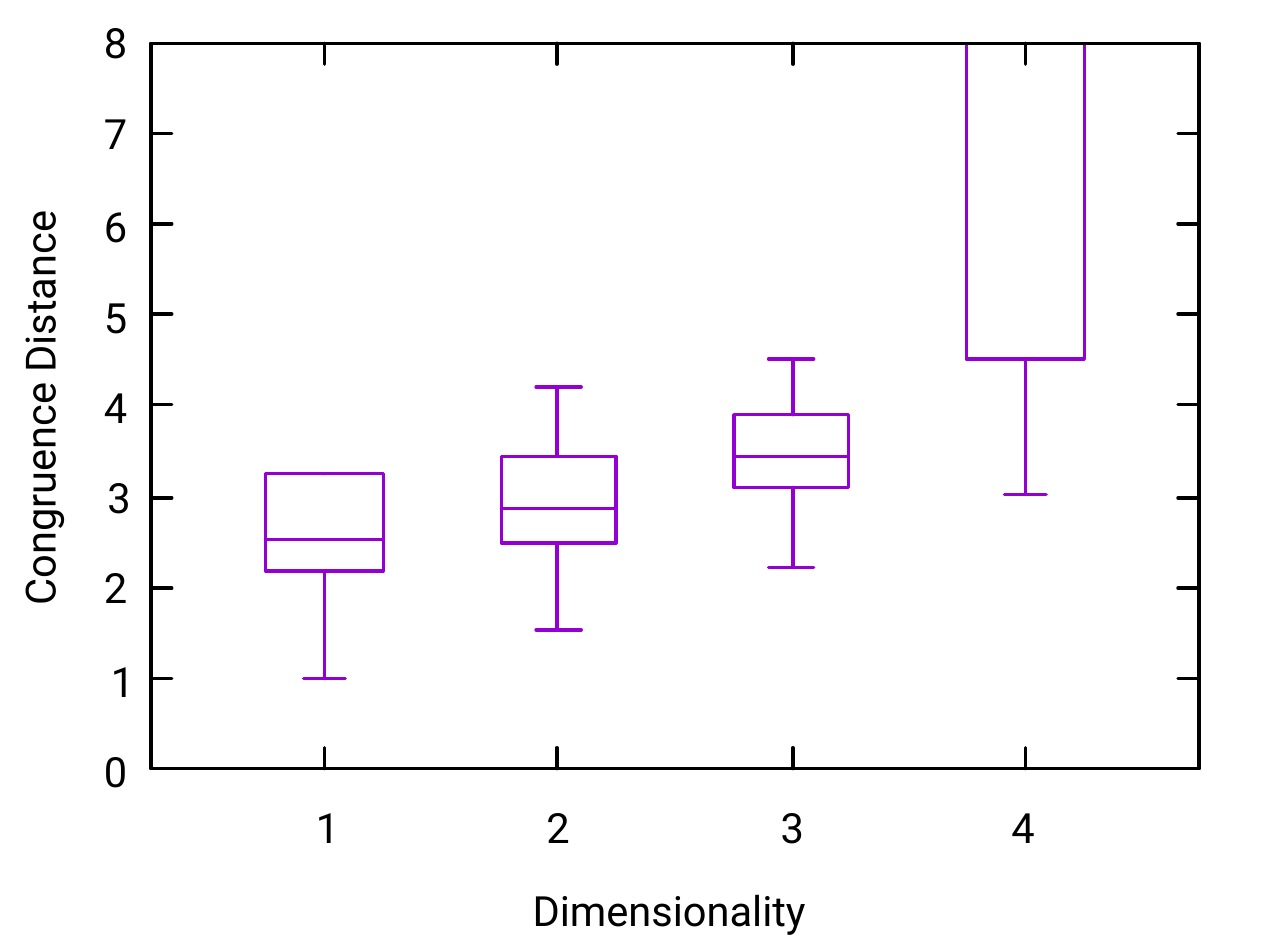}
    \includegraphics[width=.49\linewidth]{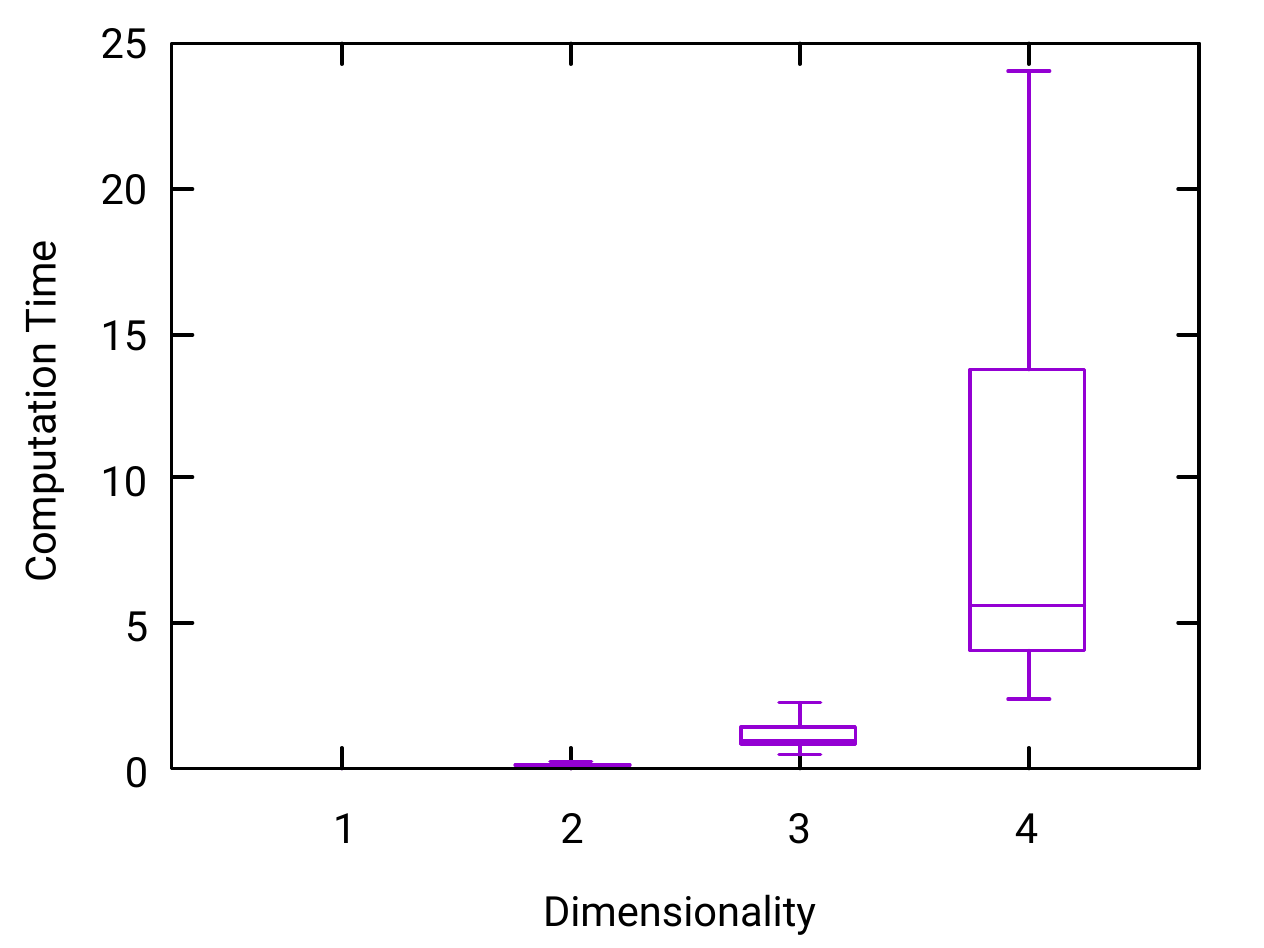}
    \caption{Boxplot: distance values (left) and runtimes (right) from our Optimizer on congruent time series.}
    \label{fig:zerodist}
\end{figure}

We manually transformed time series $T$ with a random orthogonal matrix $M^*$ and a small random vector $v^*$ and solved the optimization problem $d^C(T,M^*\cdot T+v^*)$ to examine whether our optimizer is working properly.
Clearly, we expect the optimizer to find a solution with value $0$.
Whenever the optimizer claimed large distance values, we concluded that the optimizer is not working.
We tried different optimizer strategies and chose an augmented lagrangian algorithm \cite{AUGLAGRANGE} with the BOBYQA algorithm \cite{BOBYCA} as local optimizer for further experiments because it promised the best performance with these experiments.
We used the implementations provided by the NLopt library \cite{NLOPT}.

We used the \ram{} dataset generator \cite{RAMgenerator} to evaluate the optimizer on time series with varying dimensionality.
We solved $400$ optimization problems for varying dimensionality and removed all of those runs where the optimizer did not find a reasonable solution (i.\,e. runs where the optimizer yielded solutions larger than $100$).
Figure~\ref{fig:zerodist} shows the distance values proposed by the optimizer (and therefore the error it makes) per dimensionality up to dimensionality $4$.
For higher dimensionalities, the optimizer completely failed to find any reasonable value near $0$ although we gave it enough resources of any kind (e.\,g. number of iterations, computation time, etc.).
Figure~\ref{fig:zerodist} also shows that the computation times rapidly increase with increasing dimensionality.
Because of the raising error and runtime with increasing dimensionality, an evaluation of the congruence distance on higher dimensionality is not feasable.
Hence, we can only consider up to $4$-dimensional time series in all further experiments.

\subsection{Tightness of Approximations}
\label{sec:tightness}

\begin{figure}
    \centering
    \includegraphics[width=.49\linewidth]{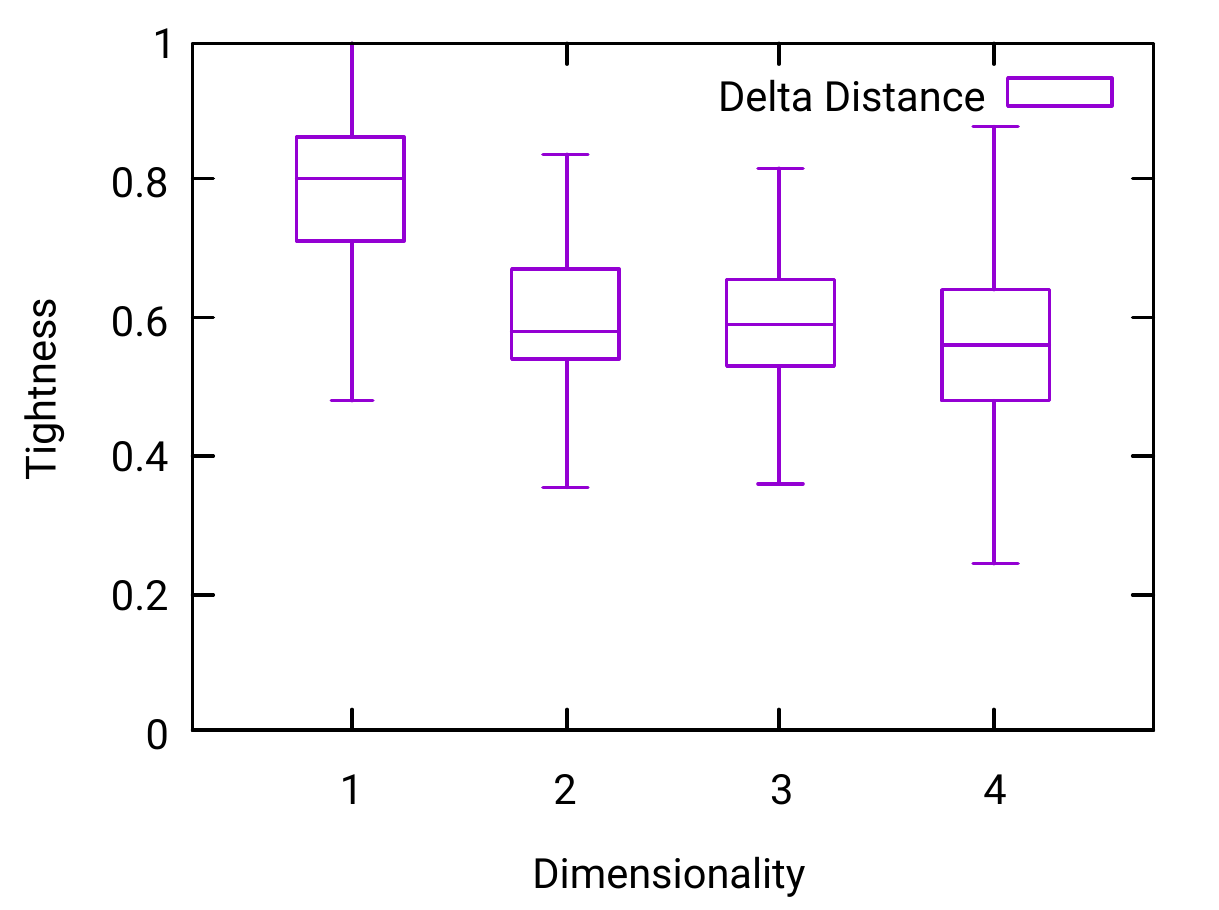}
    \includegraphics[width=.49\linewidth]{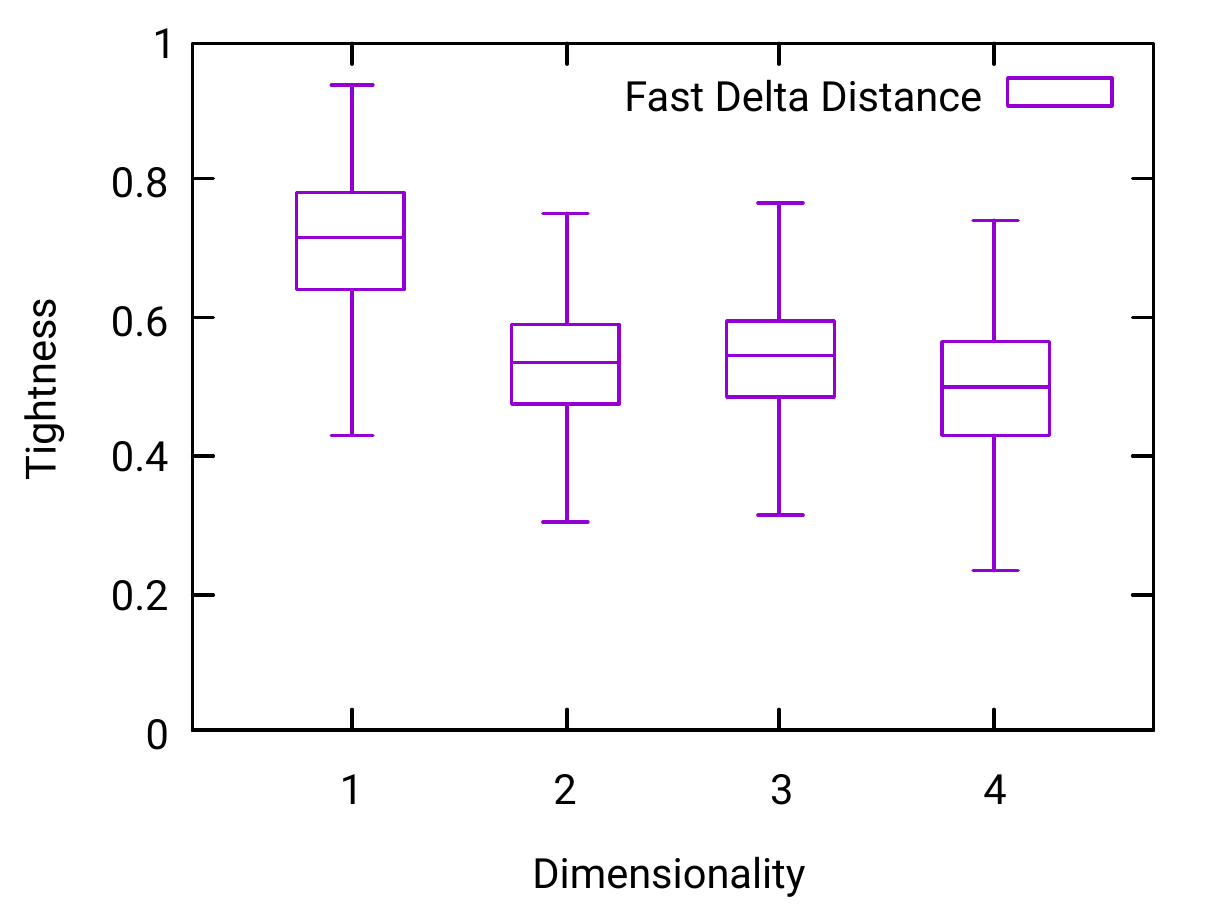}

    \includegraphics[width=.49\linewidth]{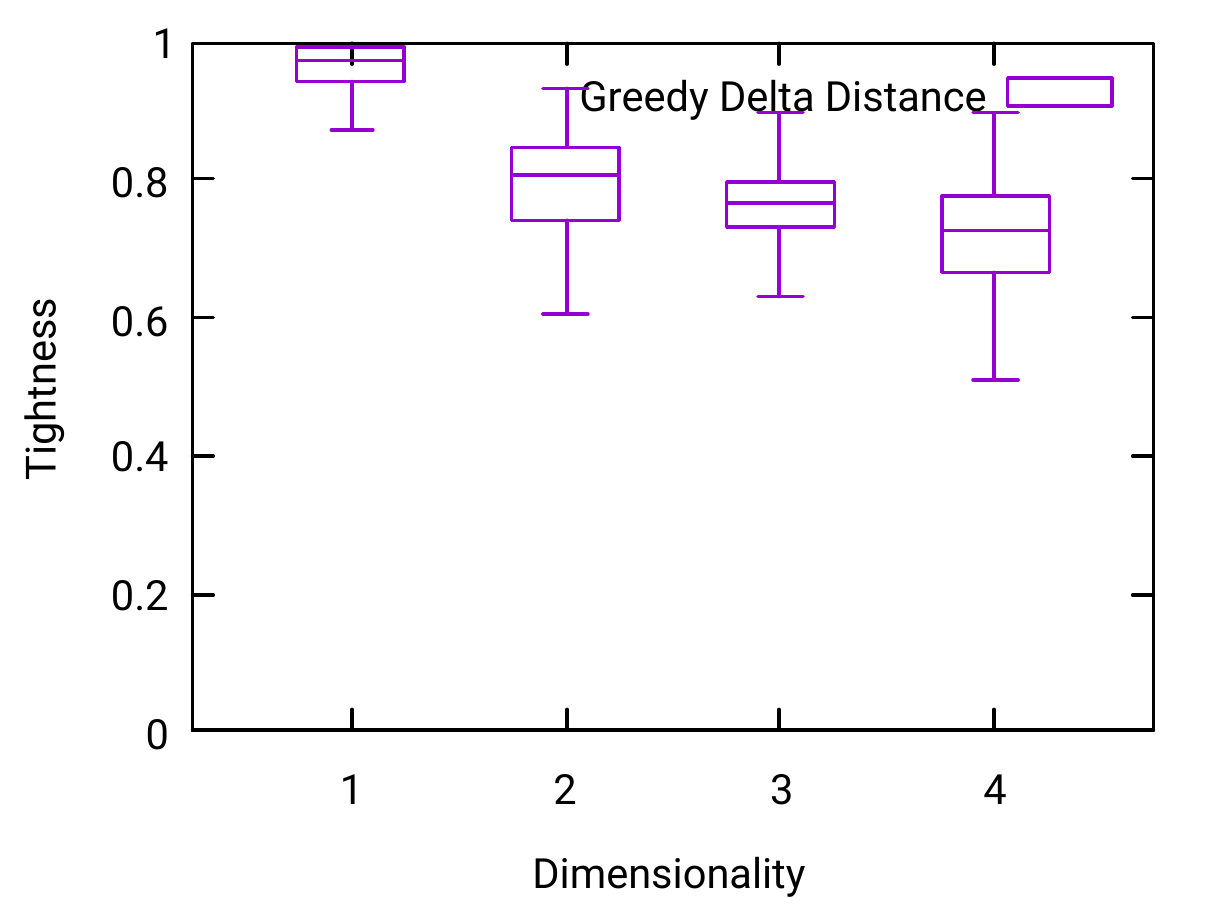}
    \includegraphics[width=.49\linewidth]{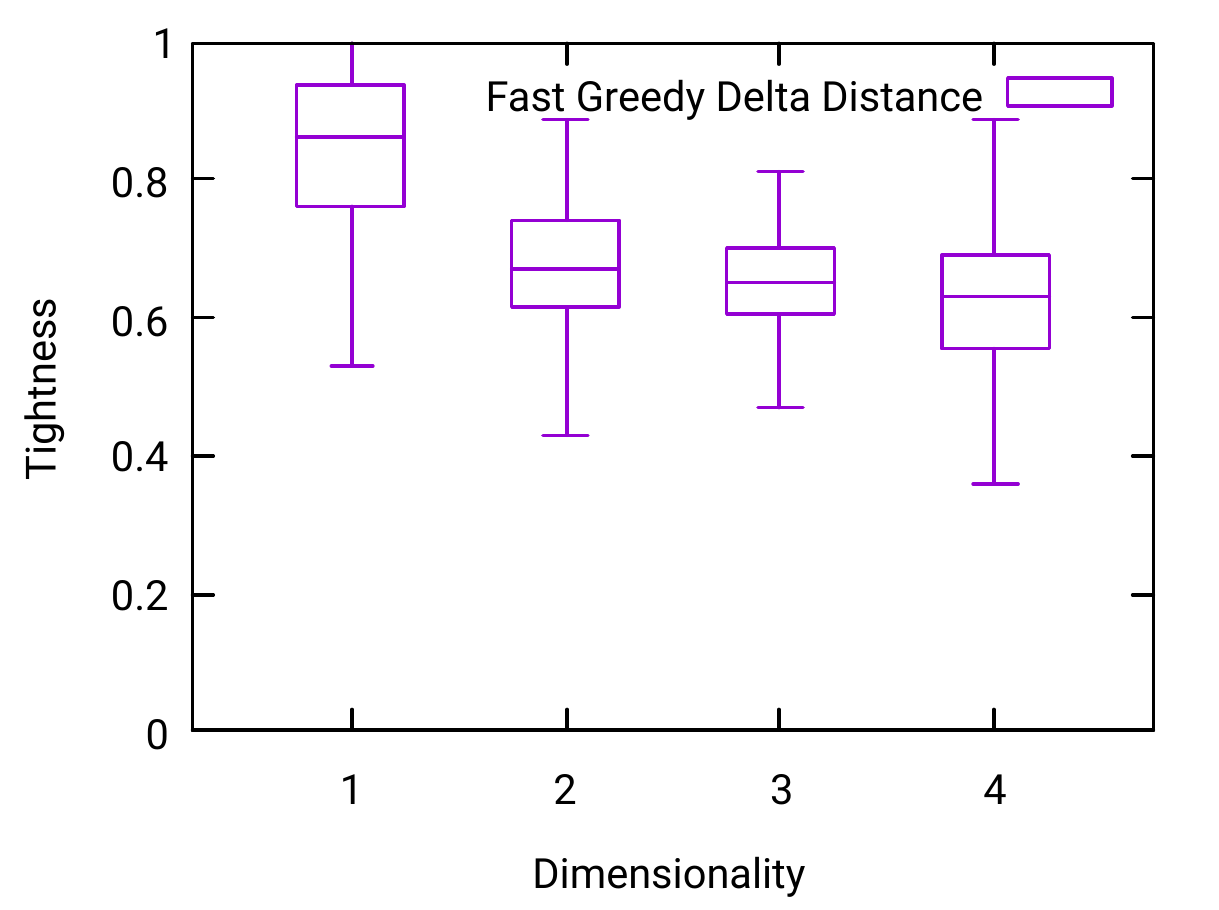}
    \caption{Average tightness of the delta distance (top left), the fast delta distance (top right), the greedy delta distance (bottom left), and the fast greedy delta distance (bottom right) to the congruence distance, respectively.}
    \label{fig:ramtightness}
\end{figure}

In order to evaluate the tightness of the (fast) delta distance and (fast) greedy delta distance as lower bounds to the congruence distance, we used the \ram{} dataset generator \cite{RAMgenerator} as well as a real world dataset with 2-dimensional time series (character trajectories \cite{UCIDatasets}, contains over 2800 time series).
Other real world datasets with higher dimensionality have not been suitable because the optimizer failed to compute the congruence distance.

Since making the (greedy) delta distance time warping aware is future work, we have to deal with time warping another way.
We simply preprocess our datasets, such that each time series, seen as a trajectory, moves with constant speed, i.\,e. for each \emph{dewarped} time series, the following holds:
\begin{align*}
    \dE\left( t_i,t_{i+1} \right) \approx \dE\left( t_{i+1},t_{i+2} \right).
\end{align*}
We achieve this property by simply reinterpolating the time series regarding the arc length.

Figure~\ref{fig:ramtightness} shows the tightness of the approximations on \ram{} datasets.
As we expected, the greedy delta distance provides the tightest approximation to the congruence distance (provided by our optimizer).

As we observed in Section~\ref{sec:evaloptimizer}, the error of our optimizer increases with increasing dimensionality.
Hence, the tightness of the optimizer to the real congruence distance is decreasing.
Since we can observe a similar behaviour here (the tightness of the approximation is decreasing with increasing dimensionality), the reason might be the inaccuracy of the optimizer.
Either way, we can see that the tightness is above $50\%$ in most cases.
Especially when using the greedy delta distance, the tightness is above $75\%$ in most cases.

On the character trajectories dataset, the delta distance and the greedy delta distance achieved a tightness of $63\%$ and $83\%$, respectively.

\subsection{Speedup of Approximations}
\label{sec:speedup}

\begin{figure}
    \centering
    \includegraphics[width=.49\linewidth]{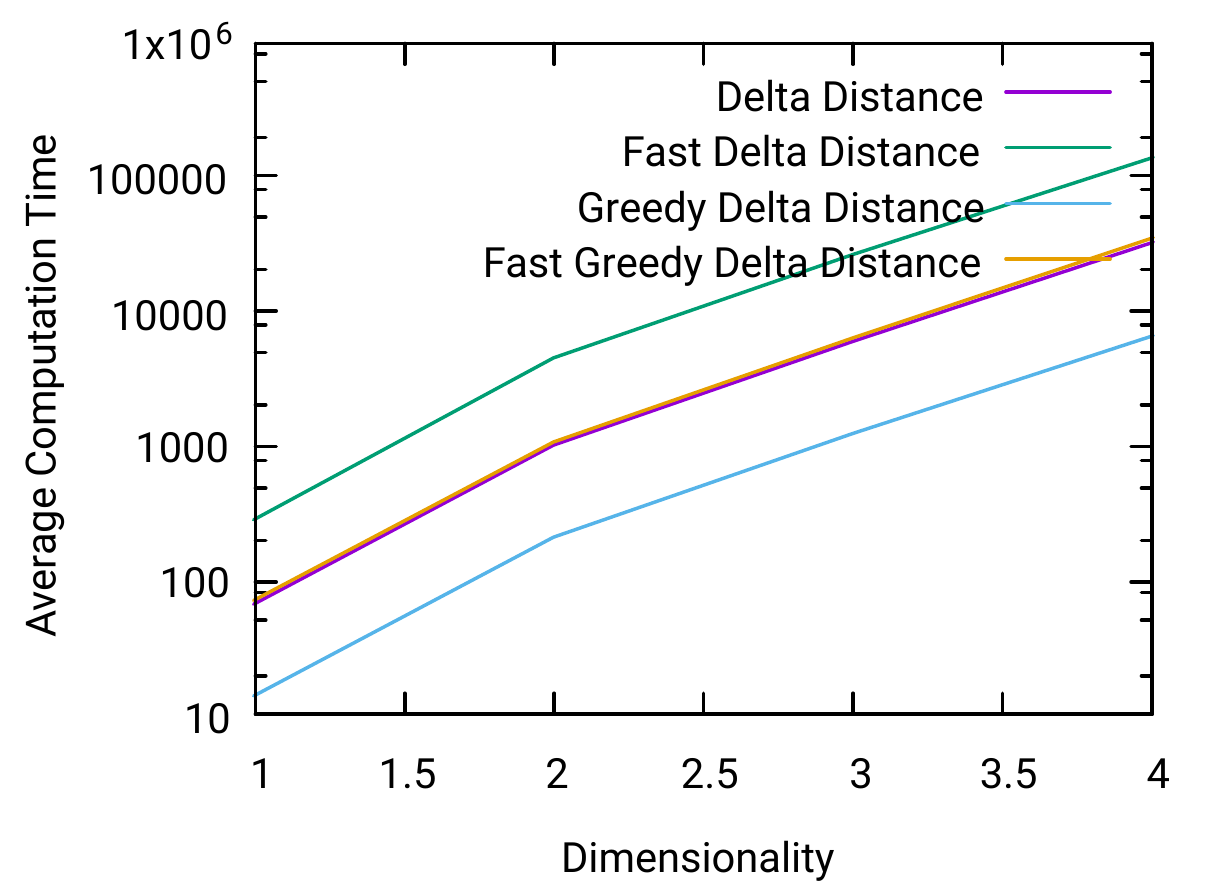}
    \caption{Average speedup of the approximations to our optimizer}
    \label{fig:ramspeedup}
\end{figure}

Figure~\ref{fig:ramspeedup} shows the speedup of the approximations to the optimizer.
As expected, the speedup increases exponentially with increasing dimensionality.
While the fast delta distance is the fastest algorithm, it also provides the worst approximation (compare with Figure~\ref{fig:ramtightness}).
On the other hand, the greedy delta distance provides the best approximation while being the slowest algorithm.
Still, the greedy delta distance is multiple orders of magnitudes faster than our optimizer.

The following speedups have been achieved on the character trajectory dataset:
$1642$ with the delta distance; $8040$ with the fast delta distance; $321$ with the greedy delta distance; $2287$ with the fast greedy delta distance.
The results are similar to those on the \ram{} generated datasets.

\section{Conclusion and Future Work}
\label{sec:conclusion}

In this paper, we analyzed the problem of measuring the congruence between two time series.
We provided four measures for approximating the congruence distance which are at least $2$ orders of magnitude faster than the congruence distance.
The first (namely, the delta distance) provides the additional ability to be used in metrix indexing structures.
The second (greedy delta distance) loses this benefit, but seems to achieve a better approximation.
Both approximations have linear complexity regarding the dimensionality but at least quadratic complexity regarding the length of the time series.
The other two approximations address this problem at a cost of approximation quality.
They have quasi-linear runtime regarding the length.

In practical applications, time series distance functions need to be robust against time warping.
The approximations provided in this work are based on comparing self-similarity matrices of time series.
Based on this idea, our next step is to develop a time warping distance function measuring the congruency.

\bibliographystyle{abbrv}
\bibliography{literature}

\end{document}